\newtheorem{Thm}{Theorem}
\newtheorem{Lem}[Thm]{Lemma}
\newtheorem{Cor}[Thm]{Corollary}
\newtheorem{Def}{Definition}
\newenvironment{proof}{\noindent {\textbf{Proof }}}{$\Box$ \medskip}
\mathchardef\mhyphen="2D
\newcommand{\defeq}{\stackrel{\mathsf{def}}{=}}
\newcommand{\ket}[1]{| #1 \rangle}
\newcommand{\bra}[1]{\langle #1 |}
\newcommand\mbR{\mbox{$\mathbb{R}$}}
\newcommand\mbC{\mbox{$\mathbb{C}$}}
\newcommand{\suppress}[1]{}
\newcommand\cptp{\mbox{\sf {CPTP}}\xspace}
\begin{document}
\title{\bf On characterizing quantum correlated equilibria}
\author{Zhaohui Wei \thanks{Center for Quantum Technologies, National University of Singapore. Email: {\tt cqtwz@nus.edu.sg}} \qquad Shengyu Zhang\thanks{Department of Computer Science and Engineering and The Institute of Theoretical Computer Science and Communications, The Chinese University of Hong Kong, Shatin, N.T., Hong Kong. Email: {\tt syzhang@cse.cuhk.edu.hk}.}}
\maketitle

\begin{abstract}
Quantum game theory lays a foundation for understanding the interaction of people using quantum computers with conflicting interests. Recently Zhang proposed a simple yet rich model to study quantum strategic games, and addressed some quantitative questions for general games of growing sizes~\cite{Zha10}. However, one fundamental question that the paper did not consider is the characterization of quantum correlated equilibria (QCE). In this paper, we answer this question by giving a sufficient and necessary condition for an arbitrary state $\rho$ being a QCE. In addition, when the condition fails to hold for some player $i$, we give an explicit POVM for that player to achieve a strictly positive gain. Finally, we give some upper bounds for the maximum gain by playing quantum strategies over classical ones, and the bounds are tight for some games.
\end{abstract}

\section{Introduction}
Game theory studies the interaction of different players with possibly conflicting goals~\cite{OR94,FT91,VNRT07}. Equilibrium is a central solution concept which characterizes the situation that no player likes to deviate from the current strategy provided that all other players do not change theirs. If each player chooses her strategy from a distribution (on her own strategy space), then the joint product distribution is a (mixed) Nash equilibrium if no player has incentive to change her distribution. A fundamental theorem by Nash says that any any game with a finite set of strategies has at least one Nash equilibrium~\cite{Nas51}.

Aumann~\cite{Aum74} gave an important generalization of Nash equilibrium, called correlated equilibrium (CE), where a Referee selects a joint strategy $s=(s_1,\ldots,s_k)$ from some distribution $p$ and suggested $s_i$ to the $i$-th player. The joint distribution $p$ is a correlated equilibrium (CE) if no player $i$, when sees only  her part $s_i$, cannot deviate from this suggested strategy to improve her expected payoff. The notion of correlated equilibria captures the optimal solution in natural games such as the Traffic Light and the Battle of the Sexes (\cite{VNRT07}, Chapter 1). The set of CE also has good mathematical properties, such as being convex, with Nash equilibria being some of the vertices of the polytope. Computationally, it is benign for finding the best CE (of any game with constant players), measured by any linear function of payoffs, simply by solving a linear program. Other example include that a natural learning dynamics can lead to an approximate variant of CE (\hspace{-.08em}\cite{VNRT07}, Chapter 4), and all CE in a graphical game with $n$ players and $\log(n)$ maximum degree can be found in polynomial time (\hspace{-.08em}\cite{VNRT07}, Chapter 7).

In the quantum world, quantum game theory lays a foundation for understanding the interaction of people using quantum computers with conflicting interests. Indeed, quantization of classical strategic games have drawn much attention in the past decade. Despite the rapid accumulation of literature~\cite{EWL99,BH01b,LJ03,FA03,FA05,DLX+02b,DLX+02a,PSWZ07}, the whole picture of the area is not as clean as one
desires, partially due to controversy in
models~\cite{BH01,CT06}.
Recently, Zhang proposed a new model which is simpler, arguably more
natural, and corresponding more precisely to the classical strategic
games~\cite{Zha10}; also see that paper for a review of the existing
literature under the name of ``quantum games". Other than the model,
what mainly distinguishes that work from previous ones is the
generality of the classical game it studies: Unlike previous work
focusing on specific games of small sizes or refereed extensive
games, the paper studies general strategic games of growing sizes.
In addition, rather than aiming at qualitative questions such as
whether playing quantum strategies has any advantage as in previous
work,~\cite{Zha10} studies quantitative questions such as how much
quantum advantage can a general-sized game have.

Solution concepts such as Nash equilibrium and correlated
equilibrium are naturally extended to the quantum model
in~\cite{Zha10}, and it is studied how well standard maps between
classical and quantum states preserves equilibrium properties. It
turns out that if $p$ is a classical Nash equilibrium, then both
$\rho = \sum_s p(s) \ket{s}\bra{s}$ and $\ket{\psi} = \sum_s
\sqrt{p(s)} \ket{s}$ are quantum Nash equilibria. But correlated
equilibrium is of a different story: While $\rho = \sum_s p(s)
\ket{s}\bra{s}$ is still guaranteed to be a quantum correlated
equilibrium if $p$ is a classical correlated equilibrium, the
mapping of $p \mapsto \ket{\psi} = \sum_s \sqrt{p(s)} \ket{s}$ can
severely destroy the correlated equilibrium property. Therefore,
correlated equilibrium is a subtler subject to study.

Given the importance of correlated equilibrium in game theory and computer science, it is desirable to well understand quantum correlated equilibria. One fundamental question that~\cite{Zha10} did not address is the following: For an arbitrary (classical) strategic game, can we characterize all the quantum correlated equilibria (QCE) in the quantum game?

In this paper, we answer this question by giving the following sufficient and necessary condition for any given game and any state $\rho$.
\begin{Thm}\label{thm: main}
A quantum state $\rho$ in space $H$ is a QCE if and only if for each
player $t$, when we write
$\rho=[\rho_{j_1j_2}^{i_1i_2}]_{i_1j_1,i_2j_2}$, where
$i_1,i_2\in[m]\defeq\{1,2,...,m\}$ and $j_1,j_2\in[n]$ with $m =
dim(H_t)$ and $n = dim(H_{-t})$, we have
\begin{equation}\label{eq: sdp condition}
B_i \defeq \Big[\sum_j\rho^{i_1i_2}_{jj}(a_{ij}-a_{i_1j})\Big]_{i_1i_2} \preceq 0, \quad \forall i\in [m].
\end{equation}
\end{Thm}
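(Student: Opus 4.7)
The plan is to compute the gain of a general deviation explicitly and match it to the matrices $B_i$. Any deviation by player $t$ is a POVM $\{M_i\}$ on $H_t$ (the no-deviation case being the computational-basis PVM $M_i=\ket{i}\bra{i}$). Expanding $\Tr[(M_i\otimes\ket{j}\bra{j})\rho]=\sum_{i_1,i_2}(M_i)_{i_1 i_2}\rho^{i_2 i_1}_{jj}$ and using $\sum_i M_i=I$ to rewrite the no-deviation payoff in the same POVM form, the gain telescopes to $\sum_i\Tr[M_i B_i^\dagger]$. Since the gain is real and $M_i$ is Hermitian, only the Hermitian part $H_i\defeq(B_i+B_i^\dagger)/2$ contributes, giving $\mathrm{gain}=\sum_i\Tr[M_i H_i]$; the condition $B_i\preceq 0$ is exactly $H_i\preceq 0$. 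The sufficient direction is then immediate: if $H_i\preceq 0$ for all $i$, each $\Tr[M_i H_i]\le 0$ and the gain is non-positive.

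For the necessary direction I plan to use SDP duality:
\[
\max_{\{M_i\}\text{ POVM}}\sum_i\Tr[M_i H_i] \;=\; \min_{\Lambda=\Lambda^\dagger}\{\Tr[\Lambda]\colon\Lambda\succeq H_i\ \forall i\},
\]
with strong duality holding since $M_i=I/m$ is strictly feasible. A direct check yields $(H_i)_{ii}=0$, so the trivial POVM achieves gain $0$; if $\rho$ is a QCE, the primal maximum is $\le 0$, hence equals $0$, making the trivial POVM primal-optimal. Complementary slackness at the dual optimum $\Lambda^*$ then forces $\Lambda^*\ket{i}=H_i\ket{i}$ for every $i$. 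The key algebraic ingredient is the antisymmetry $(H_k)_{ki}=-(H_i)_{ki}$, a direct consequence of the definition of $B_i$; combined with the Hermiticity of $\Lambda^*$, which gives $(H_i)_{ki}=(H_k)_{ki}$, this forces $(H_i)_{ki}=0$. Hence $H_i\ket{i}=0$ for every $i$, so $\Lambda^*=0$, and the feasibility $0\succeq H_i$ delivers $H_i\preceq 0$.

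The main obstacle is constructing the explicit profitable POVM when $H_{i^*}\not\preceq 0$ for some $i^*$. Let $\ket{v}$ be a unit eigenvector of $H_{i^*}$ with eigenvalue $\lambda>0$, necessarily distinct from $\ket{i^*}$ since $(H_{i^*})_{i^*i^*}=0$. I would perturb the trivial POVM in the direction $\ket{v}\bra{v}-\ket{i^*}\bra{i^*}$, setting $M_{i^*}=(1-\epsilon)\ket{i^*}\bra{i^*}+\epsilon\ket{v}\bra{v}$ and distributing the compensating operator across the remaining $M_i$'s. The delicate step is to maintain positivity of each $M_i$ while preserving a dominant first-order gain $\epsilon\lambda$; a natural approach is to pick an index $i^{**}$ with $\braket{i^{**}}{v}\ne 0$ and confine the compensation to the $2$D subspace $\mathrm{span}(\ket{i^*},\ket{i^{**}})$, assigning the projected operator to $M_{i^{**}}$ while leaving the other $M_i$'s at $\ket{i}\bra{i}$. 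Verifying that the resulting $O(\epsilon)$ gain from the $M_{i^*}$ term is not cancelled by the compensation block is the main technical point, and the same antisymmetry used in the duality argument is what prevents such cancellation.
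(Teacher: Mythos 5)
Your main argument --- the gain formula $\mathrm{Gain}=\sum_i \Tr[M_iH_i]$ followed by SDP duality for the necessity direction --- is correct and is essentially the paper's own proof (Lemma~\ref{lem: gain} plus Theorems~\ref{thm: sufficient} and~\ref{thm: necessary}); if anything, your explicit passage to the Hermitian part $H_i$ and the antisymmetry $(H_k)_{ki}=-(H_i)_{ki}$ tightens a point the paper treats loosely, since your complementary-slackness conclusion $(H_i)_{ki}=0$ is precisely what shows the anti-Hermitian part of $B_i$ vanishes, so that $H_i\preceq 0$ really does yield $B_i\preceq 0$ in the paper's sense (your blanket assertion that ``$B_i\preceq 0$ is exactly $H_i\preceq 0$'' is not true a priori, but your own argument repairs it). The one step that would fail as sketched is the explicit profitable POVM: confining the compensation for $\epsilon\bigl(\ket{i^*}\bra{i^*}-\ket{v}\bra{v}\bigr)$ to the two-dimensional block $\mathrm{span}(\ket{i^*},\ket{i^{**}})$ cannot simultaneously preserve $\sum_i M_i=I$ and $M_{i^{**}}\succeq 0$ whenever $\ket{v}$ has support outside that block (any $\ket{w}$ orthogonal to $\ket{i^*}$ and $\ket{i^{**}}$ with $\braket{w}{v}\neq 0$ witnesses the failure); the paper instead cascades the correction through all of $E_2,\dots,E_m$ with $O(\epsilon^2)$ diagonal padding (Section~\ref{sec: sdp condition}). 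Fortunately that construction is not needed for the theorem itself, since your duality argument already closes the ``only if'' direction.
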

We first show that the condition is sufficient by working out the gain of the POVM $\{E_i\}$ as $\sum_i tr(E_iB_i)$, and then give two different proofs for the necessity part. The first one is based on semi-definite program, which is simple yet not intuitive enough. The second proof is constructive, which shows that when the condition fails to hold for some player $i$, one can find an explicit POVM for that player to achieve a strictly positive gain. Finally, for those cases that are not QCE, we give some upper bounds for the maximum gain, which are shown to be tight for some games.

The paper is organized as follows. Some preliminary knowledge is introduced in Section \ref{sec: pre}. In Section \ref{sec: condition} we give the sufficient and necessary condition, and in Section \ref{sec: explicit} the necessity part is reproved constructively, which can be regarded as the operational explanation of this condition. In Section \ref{sec: ub}, we obtain some upper bounds of the gain when $\rho$ is not a QCE. Some open problems are listed in Section \ref{sec: open problems}.

\section{Definitions and notation}\label{sec: pre}
A matrix $A\in \mbC^{n\times n}$ is a Hermitian if $A^\dag = A$, or equivalently, $A$ has a spectral decomposition and all eigenvalues are real numbers. A matrix $A\in \mbC^{n\times n}$ is a positive (semi-definite), written as $A\succeq 0$, if $\bra{\psi}A\ket{\psi} \geq 0$ for all column vectors $\ket{\psi}$. Equivalently, $A\succeq 0$ if and only if $A$ has a spectral decomposition and all eigenvalues are nonnegative numbers. Thus all positive matrices are Hermitians. Define $A\preceq 0$ if $-A\succeq 0$.

\subsection{Classical strategic games}
Suppose that in a classical game there are $k$ players. Each player $i$ has a set $S_i$ of strategies. To play the game, each player $i$ selects a strategy $s_i$ from $S_i$. We use $s=(s_1,\ldots, s_k)$ to denote the \emph{joint strategy} selected by the players and $S= S_1 \times \ldots \times S_k$ to denote the set of all possible joint strategies. Each player $i$ has a utility function $u_i: S \rightarrow \mbR$, specifying the \emph{payoff} or \emph{utility} $u_i(s)$ to player $i$ on the joint strategy $s$. We use subscript $-i$ to denote the set $[k]-\{i\}$, so $s_{-i}$ is $(s_1, \ldots, s_{i-1}, s_{i+1}, \ldots, s_k)$.

\begin{Def}
A \emph{pure Nash equilibrium} is a joint strategy $s = (s_1, \ldots ,s_k) \in S$ satisfying that
\begin{align*}
    u_i(s_i,s_{-i}) \geq  u_i(s_i',s_{-i}), \qquad \forall i\in [k], \forall s'_i\in S_i.
\end{align*}
A \emph{(mixed) Nash equilibrium (NE)} is a product probability distribution $p = p_1 \times \ldots \times p_k$, where each $p_i$ is a probability distributions over $S_i$, satisfying that
\begin{align*}
    \sum_{s_{-i}} p_{-i}(s_{-i}) u_i(s_i,s_{-i}) \geq  \sum_{s_{-i}} p_{-i}(s_{-i}) u_i(s_i',s_{-i}), \quad \forall i\in [k], \ \forall s_i, s'_i\in S_i \text{ with } p_i(s_i)>0.
\end{align*}
\end{Def}

There are various extensions of (mixed) Nash equilibria. Aumann~\cite{Aum74} introduced a relaxation called \emph{correlated equilibrium}. This notion assumes an external party, called Referee, to draw a joint strategy $s = (s_1, ..., s_k)$ from some probability distribution $p$ over $S$, possibly correlated in an arbitrary way, and to suggest $s_i$ to Player $i$. Note that Player $i$ only sees $s_i$, thus the rest strategy $s_{-i}$ is a random variable over $S_{-i}$ distributed according to the conditional distribution $p|_{s_i}$, the distribution $p$ conditioned on the $i$-th part being $s_i$. Now $p$ is a correlated equilibrium if any Player $i$, upon receiving a suggested strategy $s_i$, has no incentive to change her strategy to a different $s_i' \in S_i$, assuming that all other players stick to their received suggestion $s_{-i}$.

\begin{Def} \label{thm:CE}
A \emph{correlated equilibrium (CE)} is a probability distribution $p$ over $S$ satisfying that
\begin{align*}
    \sum_{s_{-i}} p(s_i,s_{-i}) u_i(s_i,s_{-i}) \geq  \sum_{s_{-i}} p(s_i,s_{-i}) u_i(s_i',s_{-i}), \qquad \forall i\in [k], \ \forall s_i, s'_i\in S_i.
\end{align*}
\end{Def}
Notice that a correlated equilibrium $p$ is an Nash equilibrium if $p$ is a product distribution.

\subsection{Quantum strategic games}

In this paper we consider quantum games which allows the players to use strategies quantum mechanically. We assume the basic background of quantum computing; see~\cite{NC00} and~\cite{Wat08} for comprehensive introductions. The set of admissible super operators, or equivalently the set of completely positive and trace preserving (CPTP) maps, of density matrices in Hilbert spaces $H_A$ to $H_B$, is denoted by $\cptp(H_A, H_B)$. We write $\cptp(H)$ for $\cptp(H,H)$.

For a strategic game being played quantumly, each player $i$ has a Hilbert space $H_i = span\{s_i: s_i\in S_i\}$, and a joint strategy can be any quantum state $\rho$ in $H = \otimes_i H_i$. The players are supposed to measure the state $\rho$ in the computational basis, giving a distribution over the set $S$ of classical joint strategies, and yielding a payoff for each player. Therefore the (expected) payoff for player $i$ on joint strategy $\rho$ is
\begin{equation}
    u_i(\rho) = \sum_s \bra{s} \rho \ket{s} u_i(s).
\end{equation}
Please refer to~\cite{Zha10} for more explanations of the model.

Corresponding to changing strategies in a classical game, now each player $i$ can apply an arbitrary CPTP operation on $H_i$. So the natural requirement for a state being a quantum Nash equilibrium is that each player cannot gain by applying any admissible operation on her strategy space. The concepts of quantum Nash equilibrium, and quantum correlated equilibrium, and quantum approximate equilibrium are defined in the following, where we overload the notation by writing $\Phi_i$ for $\Phi_i \otimes I_{-i}$ if no confusion is caused.

\begin{Def}
A \emph{quantum Nash equilibrium (QNE)} is a quantum strategy $\rho = \rho_1 \otimes \cdots \otimes \rho_k$ for some mixed states $\rho_i$'s on $H_i$'s satisfying that
\begin{align*}
    u_i(\rho) \ge u_i(\Phi_i(\rho)), \qquad \forall i\in [k], \ \forall \Phi_i \in \cptp(H_i).
\end{align*}
\end{Def}

\begin{Def} \label{def:QCE}
A \emph{quantum correlated equilibrium (QCE)} is a quantum strategy $\rho$ in $H$ satisfying that
\begin{align*}
    u_i(\rho) \ge u_i(\Phi_i(\rho)), \qquad \forall i\in [k], \ \forall \Phi_i \in \cptp(H_i).
\end{align*}
An \emph{$\epsilon$-approximate quantum correlated equilibrium ($\epsilon$-QCE)} is a quantum strategy $\rho$ in $H$ satisfying that
\begin{align*}
    u_i(\Phi_i(\rho)) \leq u_i(\rho) + \epsilon, \qquad \forall i\in [k], \forall \Phi_i \in \cptp(H_i).
\end{align*}
\end{Def}

When we later characterize quantum correlated equilibrium, we will need that no player can increase her payoff, so a condition is required for each player. For easy presentation, we fix an arbitrary player, say, Player 1, and consider the possible increase of her payoff by local operations. Write the state as
\[
\rho = \big[\rho_{j_1j_2}^{i_1i_2}\big]_{i_1j_1,i_2j_2},
\]
where $i_1,i_2\in H_1$ and $j_1,j_2\in H_{-1}$. Suppose that the dimensions of $H_1$ and $H_{-1}$ are $m$ and $n$, respectively.

\section{Characterization of quantum correlated equilibrium}\label{sec: condition}
We will first give an explicit expression of the gain of Player 1 applying the POVM $\{E_i\}$ (compared to the measurement in the computational basis).
\begin{Lem}\label{lem: gain}
Suppose Player $1$ uses the POVM measurement $E = \{E_i\}$ and other players use $M = \{\ket{j}\bra{j}\}$ to measure their parts in the computational basis, then the gain of Player 1 by applying $E$ than measuring in the computational basis is
\[
Gain \defeq u_1\big((E\otimes M)\rho\big) - u_1(\rho) = \sum_i tr(E_i B_i), \quad \text{ where } B_i = \Big[\sum_j\rho^{i_1i_2}_{jj}(a_{ij}-a_{i_1j})\Big]_{i_1i_2}.
\]
\end{Lem}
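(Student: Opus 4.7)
My plan is to compute $u_1((E\otimes M)\rho)$ and $u_1(\rho)$ separately in terms of the matrix entries of $\rho$ and $E$, and then verify that their difference equals $\sum_i tr(E_i B_i)$ by rearranging sums. The only nontrivial input needed will be the completeness relation $\sum_i E_i = I$ of the POVM.

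For the first step I would compute the joint outcome probability $p(i,j)$ when Player 1 applies $E_i$ and the remaining players measure $\ket{j}\bra{j}$. By definition $p(i,j) = tr\big((E_i \otimes \ket{j}\bra{j})\rho\big)$, and expanding in the computational basis with $\rho^{i_1 i_2}_{j_1 j_2} = \bra{i_1, j_1} \rho \ket{i_2, j_2}$ gives $p(i,j) = \sum_{i_1, i_2} (E_i)_{i_1 i_2}\, \rho^{i_2 i_1}_{jj}$. Summing the payoffs weighted by $p(i,j)$ then yields the closed form
\begin{equation*}
u_1\big((E\otimes M)\rho\big) = \sum_{i,j,i_1,i_2} a_{ij}\,(E_i)_{i_1 i_2}\,\rho^{i_2 i_1}_{jj},
\end{equation*}
while the baseline case ($E_i = \ket{i}\bra{i}$) recovers $u_1(\rho) = \sum_{i,j} a_{ij}\rho^{ii}_{jj}$.

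For the second step I would use $tr(E_i B_i) = \sum_{i_1,i_2}(E_i)_{i_1 i_2}(B_i)_{i_2 i_1}$ together with $(B_i)_{i_2 i_1} = \sum_j \rho^{i_2 i_1}_{jj}(a_{ij}-a_{i_2 j})$, and split $\sum_i tr(E_i B_i)$ into two pieces along the decomposition $a_{ij}-a_{i_2 j}$. The $a_{ij}$ piece is literally the quadruple-sum formula for $u_1((E\otimes M)\rho)$ obtained in the previous step. In the $a_{i_2 j}$ piece, the factors $\rho^{i_2 i_1}_{jj}$ and $a_{i_2 j}$ do not depend on $i$, so I can pull the sum over $i$ inside and invoke POVM completeness: $\sum_i (E_i)_{i_1 i_2} = \delta_{i_1 i_2}$. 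The resulting Kronecker delta collapses the expression to $\sum_{i_2,j} a_{i_2 j}\rho^{i_2 i_2}_{jj} = u_1(\rho)$. Subtracting then gives $\sum_i tr(E_i B_i) = u_1((E\otimes M)\rho) - u_1(\rho)$, which is the claim.

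The argument is essentially careful index bookkeeping. I do not anticipate a serious obstacle: the lone structural fact used is POVM completeness, which is precisely what identifies the ``subtracted'' term of $\sum_i tr(E_i B_i)$ with the no-deviation payoff $u_1(\rho)$.
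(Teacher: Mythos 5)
Your proposal is correct and follows essentially the same route as the paper's proof: both compute $p(i,j)=tr\big((E_i\otimes\ket{j}\bra{j})\rho\big)$ in terms of the entries of $E_i$ and $\rho$, split the difference $a_{ij}-a_{i_1j}$ inside $\sum_i tr(E_iB_i)$, and use the completeness relation $\sum_i (E_i)_{i_1i_2}=\delta_{i_1i_2}$ to identify the subtracted piece with $u_1(\rho)$. The only cosmetic difference is that the paper writes the trace via the Hilbert--Schmidt form $\sum_{i_1i_2}E_i(i_1,i_2)^*\rho^{i_1i_2}_{jj}$ (using Hermiticity of $E_i$) while you use $\sum_{i_1i_2}(E_i)_{i_1i_2}\rho^{i_2i_1}_{jj}$; these agree.
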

\begin{proof}
The probability of getting $(i,j)$ is $tr((E_i\otimes \ket{j}\bra{j}) \rho) = \sum_{i_1,i_2} E_i(i_1,i_2)^*\rho_{jj}^{i_1i_2}$. Note that
\begin{align}
\sum_{ij} \sum_{i_1i_2} E_i(i_1,i_2)^* \rho_{jj}^{i_1i_2} a_{i_1j} & = \sum_{i_1i_2j} \big(\sum_{i} E_i(i_1,i_2)^*\big) \rho_{jj}^{i_1i_2} a_{i_1j} = \sum_{i_1j} \rho_{jj}^{i_1i_1} a_{i_1j} = \sum_{ij} \rho_{jj}^{ii} a_{ij}
\end{align}
where the second equality is because being a POVM measurement, $\{E_i\}$ satisfies $\sum_{i} E_i(i_1,i_2) = \delta_{i_1,i_2}$. Therefore,
\begin{align}
Gain & = \sum_{ij}\Big(\sum_{i_1,i_2} E_i(i_1,i_2)^*\rho_{jj}^{i_1i_2} - \rho_{jj}^{ii}\Big) a_{ij} \\
& = \sum_{ij}\sum_{i_1,i_2} E_i(i_1,i_2)^*\rho_{jj}^{i_1i_2}a_{ij} - \sum_{ij}\sum_{i_1,i_2} E_i(i_1,i_2)^*\rho_{jj}^{i_1i_2} a_{i_1j} \\
& = \sum_i \sum_{i_1,i_2} E_i(i_1,i_2)^* \sum_j \rho_{jj}^{i_1i_2} (a_{ij} - a_{i_1j}) \\
& = \sum_i tr(E_i^\dag B_i) = \sum_i tr(E_i B_i)
\end{align}
\end{proof}

The above lemma immediate gives a sufficient condition for a state $\rho$ being a QCE.
\begin{Thm}\label{thm: sufficient}
    If for each player, the corresponding $B_i \preceq 0$ for all $i\in [m]$, then $\rho$ is a QCE.
\end{Thm}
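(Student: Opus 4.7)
The plan is to reduce the QCE condition, which quantifies over arbitrary CPTP maps $\Phi_i \in \cptp(H_i)$, to the situation already analyzed in Lemma \ref{lem: gain}, which only covers POVM measurements applied before a computational-basis readout. Once that reduction is in place, the conclusion follows immediately from the positivity of POVM elements and the hypothesis $B_i \preceq 0$.

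First I would fix an arbitrary player (Player $1$, say) and an arbitrary $\Phi_1 \in \cptp(H_1)$ with Kraus operators $\{K_\ell\}$, and argue that applying $\Phi_1$ and then measuring every register in the computational basis produces the same classical outcome distribution as some POVM $\{E_i\}$ applied directly by Player $1$. Concretely, the joint probability of outcome $(i,j)$ is
\[
\bra{ij}\Phi_1(\rho)\ket{ij} \;=\; tr\bigl((E_i \otimes \ket{j}\bra{j})\rho\bigr), \qquad E_i \defeq \sum_\ell K_\ell^\dag \ket{i}\bra{i} K_\ell,
\]
and $\{E_i\}$ is a valid POVM since $\sum_i E_i = \sum_\ell K_\ell^\dag K_\ell = I$. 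Consequently, $u_1(\Phi_1(\rho)) = u_1((E\otimes M)\rho)$ in the notation of Lemma \ref{lem: gain}, so the gain of $\Phi_1$ over the passive measurement equals $\sum_i tr(E_i B_i)$.

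Second I would combine $E_i \succeq 0$ (being a POVM element) with the hypothesis $B_i \preceq 0$ to conclude that each summand $tr(E_i B_i)$ is non-positive. This uses the standard fact that $X\succeq 0$ and $Y\preceq 0$ imply $tr(XY)\le 0$: writing $-Y = Z^\dag Z$, one has $tr(XY) = -tr(ZXZ^\dag)\le 0$. Summing over $i$ then gives $u_1(\Phi_1(\rho)) - u_1(\rho) \le 0$. The same argument applies verbatim to every player $t$ after relabeling, so $\rho$ satisfies Definition \ref{def:QCE} and is a QCE.

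The only step that takes any thought is the CPTP-to-POVM reduction: the joint outcome distribution depends solely on the diagonal of $\Phi_1(\rho)$ in the computational basis, so any CPTP action on Player $1$'s register collapses, for payoff purposes, to the induced POVM $\{E_i\}$. Everything else is positivity of the trace inner product, so there is no real obstacle beyond writing down the Kraus identification above.
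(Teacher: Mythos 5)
Your proof is correct and follows the same route as the paper's: invoke Lemma \ref{lem: gain} to write the gain as $\sum_i tr(E_i B_i)$ and conclude from $E_i \succeq 0$ and $B_i \preceq 0$ that every summand is non-positive. The one place you are more careful than the paper is the explicit Kraus-operator reduction showing that an arbitrary $\Phi_1 \in \cptp(H_1)$ induces a POVM $E_i = \sum_\ell K_\ell^\dag \ket{i}\bra{i} K_\ell$ with the same payoff; the paper's proof quantifies only over POVMs and leaves this reduction --- which is needed because Definition \ref{def:QCE} quantifies over all CPTP maps --- implicit, so your version actually closes a small gap rather than opening one.
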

\begin{proof}
By the above lemma, the gain $\sum_i tr(E_i B_i) \leq 0$ because each $B_i \preceq 0$ and each $E_i \succeq 0$. Since this holds for all possible POVM $\{E_i\}$, $\rho$ is a QCE by definition.
\end{proof}

Next we will use SDP duality to show that the condition is also necessary.
\begin{Thm}\label{thm: necessary}
Suppose $\rho$ is a QCE, then for each player $t$, when we write $\rho=[\rho_{j_1j_2}^{i_1i_2}]_{i_1j_1,i_2j_2}$, where $i_1,i_2\in[m]$ and $j_1,j_2\in[n]$ with $m = dim(H_t)$ and $n = dim(H_{-t})$, we have
\begin{equation}\label{eq: sdp condition}
B_i \defeq \Big[\sum_j\rho^{i_1i_2}_{jj}(a_{ij}-a_{i_1j})\Big]_{i_1i_2} \preceq 0, \quad \forall i\in [m].
\end{equation}
\end{Thm}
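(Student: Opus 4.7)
The plan is to derive the condition via SDP duality. By Lemma~\ref{lem: gain}, the maximum gain available to Player~$t$ is the optimal value of the SDP
\[
\text{(P)}:\quad \max\ \sum_i \Tr(E_i B_i) \quad \text{s.t.}\quad E_i\succeq 0,\ \sum_i E_i = I.
\]
Since each $E_i$ is Hermitian, only the Hermitian part $\hat B_i := \tfrac12(B_i+B_i^\dagger)$ enters the objective, and the dual SDP is
\[
\text{(D)}:\quad \min\ \Tr(Y) \quad \text{s.t.}\quad Y=Y^\dagger,\ Y\succeq \hat B_i\ \forall\, i.
\]
Both programs are strictly feasible ($E_i=I/m$ and $Y=cI$ for large $c$), so Slater's condition gives zero duality gap with attainment on both sides. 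My aim is to show that the QCE hypothesis forces every dual optimum to be $Y^\star = 0$, which is exactly the statement $\hat B_i \preceq 0$ (equivalently $B_i \preceq 0$) for all $i$.

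First I would pin down a primal optimum. A direct calculation gives $(\hat B_i)_{ii}=\sum_j \rho^{ii}_{jj}(a_{ij}-a_{ij})=0$, so $E_i^\star=\ketbra{i}$ is primal feasible with objective $0$, and the QCE hypothesis says no POVM does strictly better. Hence $\{E_i^\star\}$ is primal optimal, and SDP complementary slackness yields $E_i^\star(Y^\star-\hat B_i)=0=(Y^\star-\hat B_i)E_i^\star$ for every $i$ and every dual optimum $Y^\star$. With $E_i^\star=\ketbra{i}$ this kills the $i$th row and column of $Y^\star-\hat B_i$, giving
\[
(Y^\star)_{ik} = (\hat B_i)_{ik}\quad\text{and}\quad (Y^\star)_{ki} = (\hat B_i)_{ki}\qquad\forall\, i,k;
\]
the same identity with indices swapped also gives $(Y^\star)_{ik}=(\hat B_k)_{ik}$, so $(\hat B_i)_{ik}=(\hat B_k)_{ik}$. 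A direct calculation from the definition of $B_t$ (using that $\rho$ is Hermitian) shows that for $i\ne k$,
\[
(\hat B_i)_{ik} = \tfrac12 \sum_j \rho^{ik}_{jj}(a_{ij}-a_{kj}) = -(\hat B_k)_{ik},
\]
so $(\hat B_i)_{ik}=0$ whenever $i\ne k$. Combined with $(\hat B_i)_{ii}=0$, every entry of $Y^\star$ must vanish, hence $Y^\star = 0$ and the dual feasibility condition $Y^\star\succeq \hat B_i$ becomes $B_i \preceq 0$.

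The expected obstacle is conceptual rather than technical: the dual feasibility $Y\succeq\hat B_i$ together with $\Tr Y=0$ does not on its face look strong enough to force each $\hat B_i$ to be negative semidefinite, and for arbitrary Hermitian $\hat B_i$'s it would not be. The leverage comes from the antisymmetry $(\hat B_i)_{ik}=-(\hat B_k)_{ik}$, which is built into the construction of $B_i$ from $\rho$ and the payoff matrix; combined with complementary slackness this collapses $Y^\star$ all the way to zero, converting the weak-looking semidefinite inequality into the desired tight conclusion.
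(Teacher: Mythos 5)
Your proof is correct and follows essentially the same route as the paper: express the maximum gain as an SDP via Lemma~\ref{lem: gain}, dualize, and show the QCE hypothesis forces the dual optimum $Y^\star=0$, whence $Y^\star\succeq \hat B_i$ gives the claim. The only real difference is internal bookkeeping: the paper works directly with the (possibly non-Hermitian) $B_i$, using $\Tr(Y)=0$ together with the fact that a positive semidefinite matrix with a zero diagonal entry has the corresponding row zero, whereas you symmetrize to $\hat B_i$ and recover the vanishing of $Y^\star$ from complementary slackness plus the antisymmetry $(\hat B_i)_{ik}=-(\hat B_k)_{ik}$ --- which is arguably a cleaner treatment of the Hermitian-part issue, and your identity $(\hat B_i)_{ik}=0$ is exactly the paper's Eq.~\eqref{eq: eq condition}, which is what makes $B_i$ Hermitian and justifies your parenthetical ``equivalently $B_i\preceq 0$.''
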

\begin{proof}
Since $\rho$ is a QCE, Player $t$ cannot increase her payoff by applying any POVM. Therefore, the value of the following maximization problem
\begin{align*}
    \max & \quad \sum_{i} tr(E_iB_i)\\
    \emph{s.t.} & \quad E_i \succeq 0, \quad \forall i\in [m], \\
    & \quad \sum_{i=1}^m E_i = I_m
\end{align*}
is equal to 0. The dual of the SDP is the following.
\begin{align*}
    \textbf{\emph{Dual:}} \\
    \min & \quad \emph{tr}(Y)\\
    \emph{s.t.} & \quad Y \succeq B_i,\ \forall i\in [m]
\end{align*}
Note that $Y(i,i) \geq B_i(i,i) = 0$, so the optimal value being 0 implies that all $Y(i,i) = B_i(i,i) = 0$. But $Y - B_i \succeq 0$, so actually the $i$-th row of $Y - B_i$ is all $0$. Since the $i$-th row of $B_i$ is 0, so the $i$-th row of $Y_i$ is 0. Thus the entire $Y = 0$, giving the claimed relation $B_i \preceq 0$.
\end{proof}

Since a negative matrix is a Hermitian, an immediate corollary is as follows. Why this corollary
is valid has an operational explanation, which will be shown in the
next section.
%In the above proof, one can also argue in the same way that the $i$-th column of $Y - B_i$ is all 0, which yields the following corollary.
\begin{Cor}
    If $\rho$ is a QCE, then
    \begin{equation}
    \sum_j\rho^{i_1i_2}_{jj}a_{i_1j} = \sum_j\rho^{i_1i_2}_{jj} a_{i_2j}, \quad \forall i_1,i_2\in [m].
    \label{eq: eq condition}
    \end{equation}
\end{Cor}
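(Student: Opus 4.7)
The plan is to read the corollary off from the Hermiticity that is hidden inside the statement $B_i \preceq 0$. Recall from Section \ref{sec: pre} that every negative semi-definite matrix is in particular Hermitian, so Theorem \ref{thm: necessary} gives us, for free, that $B_i = B_i^\dag$ for every $i \in [m]$. The key observation is that writing out this Hermiticity, entry by entry, produces exactly the identity claimed by the corollary, and it does so in a way that turns out not to depend on the choice of $i$.

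More concretely, I would fix any $i \in [m]$ and any pair $i_1, i_2 \in [m]$, and compare the $(i_1,i_2)$-entry of $B_i$ with the complex conjugate of its $(i_2,i_1)$-entry. From the definition,
\[
B_i(i_1,i_2) = \sum_j \rho^{i_1i_2}_{jj}(a_{ij} - a_{i_1j}), \qquad B_i(i_2,i_1) = \sum_j \rho^{i_2i_1}_{jj}(a_{ij} - a_{i_2j}).
\]
Since the payoff entries $a_{ij}$ are real and $\rho$ is Hermitian (giving $\rho^{i_2i_1}_{jj} = \overline{\rho^{i_1i_2}_{jj}}$), taking the conjugate of the right-hand expression yields $\overline{B_i(i_2,i_1)} = \sum_j \rho^{i_1i_2}_{jj}(a_{ij} - a_{i_2j})$. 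Equating this to $B_i(i_1,i_2)$ and cancelling the common $\sum_j \rho^{i_1i_2}_{jj} a_{ij}$ term on both sides leaves precisely
\[
\sum_j \rho^{i_1i_2}_{jj} a_{i_1j} \;=\; \sum_j \rho^{i_1i_2}_{jj} a_{i_2j},
\]
which is equation \eqref{eq: eq condition}.

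There is essentially no obstacle here: the proof is a one-line unpacking of Hermiticity. The only mild point worth flagging is that the identity one obtains does not mention the index $i$ at all, so although Theorem \ref{thm: necessary} provides $m$ Hermitian matrices $B_1,\ldots,B_m$, all of their off-diagonal Hermiticity conditions collapse to the same family of equations indexed by $(i_1,i_2)$. In other words, the corollary is strictly weaker than the full statement $B_i \preceq 0$; it captures exactly the ``imaginary part'' (i.e.\ Hermitian skew) content of the QCE condition, while the positive-semidefiniteness content will be what the operational proof in Section \ref{sec: explicit} is designed to explain.
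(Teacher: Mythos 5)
Your proof is correct and is exactly the argument the paper intends: the paper derives the corollary in one line from the observation that a negative semi-definite matrix is Hermitian, and your entrywise unpacking of $B_i(i_1,i_2)=\overline{B_i(i_2,i_1)}$ (using that the payoffs are real and $\rho$ is Hermitian) is precisely the omitted calculation. Your side remark that the resulting identity is independent of $i$ and strictly weaker than $B_i\preceq 0$ is also consistent with the paper, which notes that Eq.~\eqref{eq: eq condition} is not by itself sufficient.
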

Both necessary conditions in Theorem \ref{thm: necessary} and the above corollary are not constructive in the sense that if $\rho$ is not a QCE, they do not provide an explicit POVM to realize a strictly positive gain of
payoff. We will resolve this issue in the next section.

\section{A constructive proof of the characterization}\label{sec: explicit}
In last section, we give two necessary conditions Eq.~\eqref{eq: sdp condition} and Eq.~\eqref{eq: eq condition}, the first of which is also sufficient (while the second is not by itself). In this section, we will give explicit local operations to increase the payoff if these conditions are not satisfied. We will first study in Section \ref{sec: eq condition} the scenario that Eq.~\eqref{eq: eq condition} is violated, in which case a local \emph{unitary} operation is explicitly given to achieve a positive gain. Based on this result, we will then consider in Section \ref{sec: sdp condition} the general scenario of Eq.~\eqref{eq: sdp condition} being violated, in which case we will exhibit an explicit POVM with a positive gain for the player.

\subsection{Eq.~\eqref{eq: eq condition} violated: gain by an explicit local unitary}\label{sec: eq condition}
\begin{Lem}\label{lem:nece-eq}
If $\sum_{j}\rho^{i_1i_2}_{jj}a_{i_1j} \neq \sum_{j}\rho^{i_1i_2}_{jj}a_{i_2j}$
for some $i_1,i_2\in [m]$, then there exists an explicit unitary only on $span\{\ket{i_1},\ket{i_2}\}$ to make an increase of payoff for the player.
\end{Lem}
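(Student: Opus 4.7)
I would bypass the POVM framework of Lemma~\ref{lem: gain} and work directly with the player's payoff, since a unitary supported on $\mathrm{span}\{\ket{i_1},\ket{i_2}\}$ admits a clean two-parameter description. For each opponents' basis index $j$, set $R_j \defeq [\rho_{jj}^{i'_1 i'_2}]_{i'_1, i'_2}$ and $A_j \defeq \mathrm{diag}(a_{1j},\ldots,a_{mj})$; then the payoff under $\rho$ is $\sum_j \mathrm{tr}(A_j R_j)$, and after applying a local unitary $U$ it becomes $\sum_j \mathrm{tr}(U^\dag A_j U \, R_j)$ by the cyclic property of trace. Hence the gain equals $\sum_j \mathrm{tr}\bigl((U^\dag A_j U - A_j) R_j\bigr)$, and when $U$ acts as some $\tilde U$ on $\mathrm{span}\{\ket{i_1},\ket{i_2}\}$ and as identity elsewhere, the operator $U^\dag A_j U - A_j$ is supported only on the $2\times 2$ principal block indexed by $\{i_1, i_2\}$; so only that block of each $R_j$ contributes.

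Next I would parameterize the nontrivial block as
\[
\tilde U \;=\; \begin{pmatrix}\cos\theta & -\sin\theta\, e^{-i\phi}\\ \sin\theta\, e^{i\phi} & \cos\theta\end{pmatrix},
\]
compute $\tilde U^\dag\, \mathrm{diag}(a_{i_1 j}, a_{i_2 j})\, \tilde U - \mathrm{diag}(a_{i_1 j}, a_{i_2 j})$ (which factors as $\delta_j \defeq a_{i_2 j}-a_{i_1 j}$ times a matrix depending only on $\theta,\phi$), and take its trace against the $2\times 2$ submatrix of $R_j$ on $\{i_1,i_2\}$. Summing over $j$ and using the Hermiticity identity $\rho_{jj}^{i_2 i_1} = \overline{\rho_{jj}^{i_1 i_2}}$ to combine the two off-diagonal cross terms into a single real part, I expect the gain to simplify to
\[
\mathrm{Gain}(\theta,\phi) \;=\; D\sin^2\theta \;+\; \sin(2\theta)\,\mathrm{Re}(e^{i\phi} c),
\]
where $c \defeq \sum_j \rho_{jj}^{i_1 i_2}\,\delta_j$ and $D \defeq \sum_j \delta_j\,(\rho_{jj}^{i_1 i_1} - \rho_{jj}^{i_2 i_2})$ is real.

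The hypothesis of the lemma is exactly $c \neq 0$. Choosing $\phi \defeq -\arg(c)$ makes $e^{i\phi} c = |c|>0$, so the gain reduces to $|c|\sin(2\theta) + D\sin^2\theta = 2|c|\,\theta + O(\theta^2)$ as $\theta \to 0^+$, which is strictly positive for any sufficiently small $\theta>0$. The explicit unitary $\tilde U(\theta,-\arg c)\oplus I_{m-2}$ on the player's space therefore strictly improves her payoff, proving the lemma. The main obstacle is purely bookkeeping: tracking row/column indices of $R_j$ correctly and verifying that Hermiticity makes the two off-diagonal contributions collapse into a single $\mathrm{Re}(\cdot)$ coefficient of $\sin(2\theta)$; once that is laid out, the conclusion is a one-line small-$\theta$ asymptotic and the optimal $\theta$ can even be read off by a direct differentiation if a quantitative bound is desired.
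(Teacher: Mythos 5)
Your proposal is correct and follows essentially the same route as the paper: both restrict to a two-level unitary on $span\{\ket{i_1},\ket{i_2}\}$, expand the gain into a cross term linear in the rotation amplitude plus diagonal terms quadratic in it, choose the phase so that the cross term becomes $|c|>0$ (the paper picks $r$ so that $u_{11}^*u_{12}\sum_j(a_{i_1j}-a_{i_2j})\rho_{jj}^{i_1i_2}$ is positive real), and take the rotation small enough that the first-order term dominates. Your $(\theta,\phi)$ parameterization and the $2|c|\theta + O(\theta^2)$ asymptotic are just a cleaner packaging of the paper's choice $u_{11}=\sqrt{1-x}$, $u_{12}=e^{ir}\sqrt{x}$ with $0<x<1/(c^2+1)$.
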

\begin{proof}
Consider the unitary operator $U$ defined by
\begin{align*}
U|i_1\rangle=u_{11}|i_1\rangle+u_{12}|i_2\rangle, \\
U|i_2\rangle=u_{21}|i_1\rangle+u_{22}|i_2\rangle.
\end{align*}
The new probability distribution of strategy after the operation of
$U^\dag$ on $span\{\ket{i_1},\ket{i_2}\}$ and identity on other
$i$'s is
\begin{equation}
p_{ij}=Tr((U|i\rangle\langle
i|U^\dag \otimes|j\rangle\langle j|)\rho).
\end{equation}
where we overload the notation by writing $U$ for $U\otimes
I_{[m]-\{i_1,i_2\}}$. Note that when $i\in[m]-\{i_1,i_2\}$,
$p_{ij}=Tr(|i\rangle\langle i|\otimes|j\rangle\langle
j|)\rho)=\rho^{ii}_{jj}$. Thus, the gain of Player 1 by the
operation $\Psi$ is
\begin{eqnarray*}
Gain & = & \sum_{ij}\Big(p_{ij}-\rho^{ii}_{jj}\Big)a_{ij} \\
& = & \sum_{j}\Big(p_{i_1j}-\rho^{i_1i_1}_{jj}\Big)a_{i_1j}+\sum_{j}\Big(p_{i_2j}-\rho^{i_2i_2}_{jj}\Big)a_{i_2j}\\
& = & \sum_{j}\Big(\sum_{a,b=1}^2\rho_{jj}^{i_a i_b} u_{1,i_a}^*u_{1,i_b} - \rho^{i_1i_1}_{jj}\Big)a_{i_1j} + \sum_{j}\Big(\sum_{a,b=1}^2\rho_{jj}^{i_ai_b} u_{2,i_a}^*u_{2,i_b} - \rho^{i_2i_2}_{jj}\Big)a_{i_2j}\\
& = &
\sum_{j}\Big(|u_{11}|^2\rho^{i_1i_1}_{jj}+u_{11}^{*}u_{12}\rho^{i_1i_2}_{jj}+u_{11}u_{12}^{*}\rho^{i_2i_1}_{jj}+|u_{12}|^2\rho^{i_2i_2}_{jj}-\rho_{jj}^{i_1i_1}\Big)a_{i_1j}\\
& & +
\sum_{j}\Big((|u_{21}|^2\rho^{i_1i_1}_{jj}+u_{21}^{*}u_{22}\rho^{i_1i_2}_{jj}+u_{21}u_{22}^{*}\rho^{i_2i_1}_{jj}+|u_{22}|^2\rho^{i_2i_2}_{jj}-\rho_{jj}^{i_2i_2}\Big)a_{i_2j}.
\end{eqnarray*}
Since $U$ is a unitary operation, we have
\begin{equation}
u_{11}^{*}u_{12}+u_{21}^{*}u_{22}=0, \ \
u_{11}u_{12}^{*}+u_{21}u_{22}^{*}=0,
\end{equation}
and
\begin{equation}
|u_{11}|^2+|u_{21}|^2=1, \ \ |u_{12}|^2+|u_{22}|^2=1.
\end{equation}
Thus, we obtain
\begin{eqnarray*}
Gain & = & u_{11}^*u_{12}\sum_{j}\Big(a_{i_1j}-a_{i_2j}\Big)\rho^{i_1i_2}_{jj}+u_{11}u_{12}^*\sum_{j}\Big(a_{i_1j}-a_{i_2j}\Big)\rho^{i_2i_1}_{jj}\\
& & - |u_{12}|^2\sum_{j}\Big(a_{i_2j}-a_{i_1j}\Big)\rho_{jj}^{i_2i_2}-|u_{21}|^2\sum_{j}\Big(a_{i_1j}-a_{i_2j}\Big)\rho_{jj}^{i_1i_1}\\
\end{eqnarray*}
Since
$\sum_{j}(a_{i_1j}-a_{i_2j})\rho^{i_1i_2}_{jj}\neq 0$, we have  $\sum_{j}(a_{i_1j}-a_{i_2j})\rho^{i_2i_1}_{jj} = \sum_{j}(a_{i_1j}-a_{i_2j})(\rho^{i_1i_2}_{jj})^* \neq 0$ as well. Define a positive real number $c$ by
\[
c = \frac{\max\Big\{\sum_{j}\Big(a_{i_2j}-a_{i_1j}\Big)\rho_{jj}^{i_2i_2}, \sum_{j}\Big(a_{i_1j}-a_{i_2j}\Big)\rho_{jj}^{i_1i_1}\Big\}}{\Big|\sum_{j}\Big(a_{i_1j}-a_{i_2j}\Big)\rho^{i_1i_2}_{jj}\Big|}.
\]
which is just to make
\begin{equation*}
\sum_{j}\Big(a_{i_2j}-a_{i_1j}\Big)\rho_{jj}^{i_2i_2} < c \Big|\sum_{j}\Big(a_{i_1j}-a_{i_2j}\Big)\rho^{i_1i_2}_{jj}\Big|,
\end{equation*}
and
\begin{equation*}
\sum_{j}\Big(a_{i_1j}-a_{i_2j}\Big)\rho_{jj}^{i_1i_1} < c\Big|\sum_{j}\Big(a_{i_1j}-a_{i_2j}\Big)\rho^{i_2i_1}_{jj}\Big|.
\end{equation*}
Now one could choose $u_{11}=\sqrt{1-x}$, and
$u_{12}=e^{ir}\sqrt{x}$, where $x$ is a positive real number, and
$r$ is a proper real number such that
$u_{11}^*u_{12}\sum_{j}(a_{i_1j}-a_{i_2j})\rho^{i_1i_2}_{jj}$
is also a positive real number. It can be checked that if
$0<x<\frac{1}{c^2+1}$, we have
\begin{equation*}
\frac{u_{11}}{|u_{12}|}=\sqrt{\frac{1-x}{x}} > c,
\end{equation*}
which implies
\begin{eqnarray*}
Gain & = & 2|u_{11}|\cdot|u_{12}|\Big|\sum_{j}\Big(a_{i_1j}-a_{i_2j}\Big)\rho^{i_1i_2}_{jj}\Big|\\
& & - |u_{12}|^2\sum_{j}\Big(a_{i_2j}-a_{i_1j}\Big)\rho_{jj}^{i_2i_2}-|u_{21}|^2\sum_{j}\Big(a_{i_1j}-a_{i_2j}\Big)\rho_{jj}^{i_1i_1}\\
& > & 0.
\end{eqnarray*}
\end{proof}

\subsection{Eq.~\eqref{eq: sdp condition} violated: gain by an explicit POVM}\label{sec: sdp condition}
In the rest of this section, we assume that condition Eq.~\eqref{eq: eq condition} holds, because otherwise there exists explicit local unitary operation to increase the payoff.

First note that under this assumption, all matrices $B_i = \Big[\sum_j\rho^{i_1i_2}_{jj}(a_{ij}-a_{i_1j})\Big]_{i_1i_2}$ are Hermitians. Indeed, we have
\begin{align}
B_i(i_2,i_1)^* & = \sum_j (\rho_{jj}^{i_2i_1})^*(a_{ij} - a_{i_2j}) &  (\text{because } a_{ij}, a_{i_2j} \in \mbR) \\
& = \sum_j \rho_{jj}^{i_1i_2} (a_{ij} - a_{i_2j}) &  (\text{because $\rho$ is Hermitian}) \\
& = \sum_j \rho_{jj}^{i_1i_2} (a_{ij} - a_{i_1j}) &  (\text{by Eq.~\eqref{eq: eq condition}})
\end{align}
Therefore all $B_i$'s have spectral decompositions.

Now suppose that $B_i \preceq 0$ is not true for some $i\in [m]$. Without loss of generality, assume that $i=1$, namely $B_1$ has a positive eigenvalue. We denote it by $\lambda$, and the corresponding eigenvector (with unit $\ell_2$ norm) is $|\psi\rangle$.
Note that the first row contains all $0$'s, and since we assumed Eq.~\eqref{eq: eq condition}, so is the first column. This allows us to write $|\psi\rangle = (0, v_2, v_3, ..., v_m)^T$ for some $v_i$'s. For the convenience of our discussion, we suppose $|v_2|\neq 0$.
Otherwise, since $\langle\phi|B_1|\phi\rangle$ is a continuous function of $|\phi\rangle$, one can always perturb $\ket{\psi}$ a little to get another vector $|\psi'\rangle$ such that $\lambda' = \langle\psi'|B_1|\psi'\rangle > 0$ and the second entry of $\ket{\psi'}$ is not 0 (while still keeping the first entry being 0). Then we replace $|\psi\rangle$ and $\lambda$ by $|\psi'\rangle$ and $\lambda'$ in the following argument.

%In the following, we will construct a local quantum operation $\Phi$ by which Player 1 can strictly increase her payoff, which will complete the proof. Suppose the Kraus operations of $\Phi$ is $\{E_i\}$, we
In the following, we will construct a local POVM $\{E_i\}$ by which Player 1 can strictly increase her payoff, which will complete the proof. Set $\{E_i\}$ to be
\begin{equation}
E_1=\varepsilon|\psi\rangle\langle\psi|+|1\rangle\langle1|=
\begin{pmatrix}
1 & 0 & 0 & \cdots & 0\\
0 & \varepsilon|v_2|^2 & \varepsilon v_2v_3^* & \cdots & \varepsilon v_2v_{m}^*\\
0 & \varepsilon v_2^*v_3 & \varepsilon|v_3|^2 & \cdots & \varepsilon v_3v_{m}^*\\
\vdots & \vdots & \vdots & \ddots & \vdots\\
0 & \varepsilon v_2^*v_{m} & \varepsilon v_3^*v_{m} & \cdots &
\varepsilon|v_m|^2
\end{pmatrix},
\end{equation}
\begin{equation}
E_2=
\begin{pmatrix}
0 & 0 & 0 & \cdots & 0\\
0 & d_{22} & -\varepsilon v_2v_3^* & \cdots & -\varepsilon v_2v_{m}^*\\
0 & -\varepsilon v_2^*v_3 & d_{23} & \cdots & 0\\
\vdots & \vdots & \vdots & \ddots & \vdots\\
0 & -\varepsilon v_2^*v_{m} & 0 & \cdots & d_{2,m}
\end{pmatrix},
\end{equation}
\begin{equation}
E_3=
\begin{pmatrix}
0 & 0 & 0 & 0 & \cdots & 0\\
0 & 0 & 0 & 0 & \cdots & 0\\
0 & 0 & d_{33} & -\varepsilon v_3v_4^* & \cdots & -\varepsilon v_3v_m^*\\
0 & 0 & -\varepsilon v_3^*v_4 & d_{34} & \cdots & 0\\
\vdots & \vdots & \vdots & \vdots & \ddots & \vdots\\
0 & 0 & -\varepsilon v_3^*v_m & 0 & \cdots & d_{3,m}
\end{pmatrix},
\end{equation}
\ \ \ \ \ \ \ \ \ \ \ \ \ \ \ \ \ \  \ \  \ \ \ \ \ \ \ \ \ \ \ \ \
\ \ \ \ \ \ \ \ $\vdots$
\begin{equation}
E_{m-1}=
\begin{pmatrix}
0 & \cdots & 0 & 0 & 0\\
\vdots & \ddots & \vdots & \vdots & \vdots\\
0 & \cdots & 0 & 0 & 0\\
0 & \cdots & 0 & d_{m-1,m-1} & -\varepsilon v_{m-1}v_m^*\\
0 & \cdots & 0 & -\varepsilon v_{m-1}^*v_m & d_{m-1,m}
\end{pmatrix},
\end{equation}
\begin{equation}
E_m=
\begin{pmatrix}
0 & \cdots & 0 & 0 & 0\\
\vdots & \ddots & \vdots & \vdots & \vdots\\
0 & \cdots & 0 & 0 & 0\\
0 & \cdots & 0 & 0 & 0\\
0 & \cdots & 0 & 0 & d_{mm}
\end{pmatrix}.
\end{equation}
Here, $\varepsilon$ is a small positive number that will be determined later. For any fixed $\varepsilon$, we will choose $d_{ij}$'s as follows. Firstly, note that we have the relationship
\begin{equation}
E_1+E_2+...+E_m=I,
\end{equation}
by which one can obtain that $d_{22}=1-\varepsilon|v_2|^2$. Let $d_{2k}=(\varepsilon|v_2v_{k}|)^2/d_{22}$, thus $d_{21}d_{2k}=(\varepsilon|v_2v_{k}|)^2$ and $E_2 \succeq 0$. After fixing $E_2$, $d_{33}$ can also be gotten by using $\sum_i E_i = I$. In general, we have
\begin{align}
& d_{ii} = 1-\epsilon|v_i|^2 - d_{2,i} - \cdots - d_{i-1,i}, \quad \forall i \geq 3, \\
& d_{ik} = \epsilon^2|v_iv_{k}|^2 / d_{ii}, \quad \forall i\geq 2, \ k \geq i+1
\end{align}
By an induction on $i$, it is not difficult to see that for any fixed $B_i$'s, for $\epsilon \rightarrow 0$, it holds that
\begin{equation}
d_{ii} = 1-\epsilon|v_i|^2 - O(\epsilon^2) = 1-O(\epsilon) > 0, \quad \forall i \geq 2
\end{equation}
and
\begin{equation}\label{eq: d_ik}
d_{ik} = \epsilon^2|v_iv_k|^2/d_{ii} = O(\epsilon^2), \quad \forall i \geq 2, \forall k\geq i+1.
\end{equation}
It can be checked that $E_1=\varepsilon|\psi\rangle\langle\psi|+|1\rangle\langle1| \succeq 0$, and every other $E_i$ is also positive because it has nonnegative diagonal entries and is actually diagonally dominant Hermitian for sufficiently small $\epsilon$. Besides, since the way we defined $\{E_i\}$ satisfies $\sum_i E_i = I$, $\{E_i\}$ is a legal POVM.

Next we calculate the gain of the Player by using $\{E_i\}$ as in Lemma \ref{lem: gain}. Since $\bra{\psi} B_1 \ket{\psi} = \lambda$, we have
\begin{equation}
Tr(E_1 B_1) = \bra{1}B_1\ket{1} + \bra{\psi}B_1\ket{\psi} = \bra{\psi}B_1\ket{\psi} = \epsilon \lambda.
\end{equation}
For $i = 2, ..., m$, note that the only nonzero off-diagonal entries of $E_i$ are on the $i$-th row and column, but those entries in $B_i$ are zero. So only the diagonal entries of $E_i$ and $B_i$ contribute to $Tr(E_i^T B_i)$, and the contribution is $\sum_{k=i}^m d_{i,k}(\sum_j \rho_{jj}^{kk}(a_{ij}-a_{kj}))$. Therefore,
\begin{align}
Gain = \sum_i Tr(E_i B_i) & = \varepsilon\lambda + \sum_{i=2}^{m} \sum_{k=i}^{m} d_{i,k} \Big(\sum_j\rho^{kk}_{jj} (a_{ij}-a_{kj})\Big) \\
& = \varepsilon\lambda + \sum_{i=2}^{m} \sum_{k=i+1}^{m} d_{i,k} \Big(\sum_j\rho^{kk}_{jj} (a_{ij}-a_{kj})\Big) & (\text{because } B_i(i,i) = 0) \\
& = \varepsilon\lambda \pm O(\epsilon^2) & (\text{because of Eq.}~\eqref{eq: d_ik})
\end{align}
Here note that $m$ and $B_i$ are all fixed and only $\epsilon$ approaches to 0. So for sufficiently small $\epsilon$, the gain is strictly positive.

\section{Upper bounds for the gain}\label{sec: ub}

In the above sections, we have shown how to determine whether a
given quantum state is a QCE or not. In this section, we consider
those quantum states that are not QCE. According to the definition
of QCE, one can find a proper POVM $\{E_i\}$ such that some player
can get a strictly positive payoff gain by this operation. A natural
question is, how much is the maximal gain? In the following theorem
we provide two simple upper bounds as first-step attempts.

\begin{Thm}
Suppose that the maximum eigenvalue of $B_i = \big[\sum_j\rho^{i_1i_2}_{jj} (a_{ij}-a_{i_1j})\big]_{i_1i_2}$ is $\lambda_i$. Let $\lambda = \max_i \lambda_i$, then we have
\begin{equation}\label{eq:Gupper}
Gain \leq m\lambda.
\end{equation}
\end{Thm}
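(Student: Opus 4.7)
The plan is to start from the identity $\text{Gain} = \sum_i \text{tr}(E_i B_i)$ supplied by Lemma~\ref{lem: gain} (for any local POVM $\{E_i\}_{i=1}^m$ that Player $1$ might apply) and then bound each summand by exploiting the Löwner-order inequality $\lambda_i I_m - B_i \succeq 0$. Under the standing assumption of Section~\ref{sec: sdp condition} that Eq.~\eqref{eq: eq condition} holds, each $B_i$ is Hermitian, so its maximum eigenvalue $\lambda_i$ is a well-defined real number and the operator inequality $B_i \preceq \lambda_i I_m$ is meaningful. (If Eq.~\eqref{eq: eq condition} fails, Lemma~\ref{lem:nece-eq} already supplies an explicit unitary achieving positive gain, so the interesting regime is exactly the one where $B_i$'s are Hermitian.)

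Next, I would invoke the elementary fact that $\text{tr}(XY) \geq 0$ whenever $X, Y \succeq 0$. Taking $X = E_i \succeq 0$ and $Y = \lambda_i I_m - B_i \succeq 0$ yields
\[
\text{tr}(E_i B_i) \leq \lambda_i \, \text{tr}(E_i).
\]
Since $\lambda_i \leq \lambda$ by definition of $\lambda$, and $\text{tr}(E_i) \geq 0$ because $E_i$ is PSD, this further gives $\text{tr}(E_i B_i) \leq \lambda \, \text{tr}(E_i)$. Summing over $i$ and using the POVM normalization $\sum_i E_i = I_m$:
\[
\text{Gain} \;=\; \sum_{i=1}^m \text{tr}(E_i B_i) \;\leq\; \lambda \sum_{i=1}^m \text{tr}(E_i) \;=\; \lambda \cdot \text{tr}(I_m) \;=\; m\lambda.
\]
Because the chain of inequalities is valid for every admissible $\{E_i\}$, the claimed bound Eq.~\eqref{eq:Gupper} follows.

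There is essentially no obstacle here — the whole argument is a two-step consequence of the Löwner-order inequality combined with the normalization of a POVM, and the only thing worth flagging in the write-up is the appeal to Hermiticity of $B_i$ so that $\lambda_i$ is real and $B_i \preceq \lambda_i I_m$ makes sense. If one wished to remove even this mild dependence on Eq.~\eqref{eq: eq condition}, one could replace $B_i$ by its Hermitian part $(B_i+B_i^\dagger)/2$ throughout, since $\text{tr}(E_i B_i)$ is real (the gain itself is real) and $E_i$ is Hermitian, so the bound is unaffected.
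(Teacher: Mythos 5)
Your proof is correct and follows essentially the same route as the paper's: bound $\mathrm{tr}(E_iB_i)\le\lambda_i\,\mathrm{tr}(E_i)$ via $B_i\preceq\lambda_i I$ and positivity of $E_i$, then sum using $\sum_i\mathrm{tr}(E_i)=m$. Your extra remark on the Hermiticity of the $B_i$'s is a reasonable clarification but does not change the argument.
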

\begin{proof}
Since $B_i\preceq \lambda_iI$ and each $E_i$ is positive, it holds that
\begin{equation*}
Tr(E_i B_i) \leq Tr(E_i (\lambda_iI)) = \lambda_i Tr(E_i).
\end{equation*}
Thus,
\begin{align}
Gain & = \sum_i Tr(E_i B_i) \\
& \leq \sum_i \lambda_i Tr(E_i) \\
& \leq \lambda \sum_i Tr(E_i) & (Tr(E_i) \geq 0) \\
& = m \lambda & (\sum_i Tr(E_i) = Tr(\sum_i E_i) = m).
\end{align}
\end{proof}

Another bound is the following.
\begin{Thm}
Suppose that the eigenvalues of $B_i = \big[\sum_j\rho^{i_1i_2}_{jj} (a_{ij}-a_{i_1j})\big]_{i_1i_2}$ is $\lambda_{i1}, ..., \lambda_{im}$. Then
\begin{equation}
Gain \leq \sum_{ij: \lambda_{ij}>0} \lambda_{ij}.
\end{equation}
\end{Thm}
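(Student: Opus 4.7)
The plan is to decompose each $B_i$ into its positive and negative spectral parts and then exploit the fact that POVM elements are bounded above by $I$. Under the standing assumption that Eq.~\eqref{eq: eq condition} holds (otherwise a strictly positive gain is already produced by the explicit unitary of Section~\ref{sec: eq condition}), each $B_i$ is Hermitian, as verified at the beginning of Section~\ref{sec: sdp condition}. Hence each $B_i$ admits a spectral decomposition $B_i = \sum_{k=1}^{m} \lambda_{ik} \ket{\phi_{ik}}\bra{\phi_{ik}}$, which I would split as $B_i = B_i^+ - B_i^-$ with $B_i^+ \defeq \sum_{k: \lambda_{ik}>0} \lambda_{ik} \ket{\phi_{ik}}\bra{\phi_{ik}} \succeq 0$ and $B_i^- \defeq -\sum_{k: \lambda_{ik}<0} \lambda_{ik} \ket{\phi_{ik}}\bra{\phi_{ik}} \succeq 0$.

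The next step is to bound $\Tr(E_i B_i)$ by $\Tr(E_i B_i^+)$. Since $E_i \succeq 0$ and $B_i^- \succeq 0$, we have $\Tr(E_i B_i^-) = \Tr\bigl((B_i^-)^{1/2} E_i (B_i^-)^{1/2}\bigr) \geq 0$, so that $\Tr(E_i B_i) \leq \Tr(E_i B_i^+)$. To further bound the right-hand side, I would use the POVM constraint: $I - E_i = \sum_{k \neq i} E_k \succeq 0$, i.e., $E_i \preceq I$. Combined with $B_i^+ \succeq 0$ this gives $\Tr\bigl((I - E_i) B_i^+\bigr) \geq 0$, and therefore
\[
\Tr(E_i B_i^+) \;\leq\; \Tr(B_i^+) \;=\; \sum_{k:\,\lambda_{ik}>0} \lambda_{ik}.
\]

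Summing over $i$ and invoking Lemma~\ref{lem: gain} would then yield
\[
Gain \;=\; \sum_{i} \Tr(E_i B_i) \;\leq\; \sum_{i}\sum_{k:\,\lambda_{ik}>0} \lambda_{ik} \;=\; \sum_{ik:\,\lambda_{ik}>0} \lambda_{ik},
\]
which is the claimed bound. I do not anticipate a real obstacle: the proof is a direct application of spectral decomposition plus the elementary operator inequality $E_i \preceq I$ for POVM elements. The only subtlety worth flagging is the appeal to Eq.~\eqref{eq: eq condition} to make sense of the eigenvalues of $B_i$ as real numbers; this is innocuous because violation of Eq.~\eqref{eq: eq condition} already produces a gain via Lemma~\ref{lem:nece-eq}, so one may WLOG work in the regime where $B_i$ is Hermitian. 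Compared to Eq.~\eqref{eq:Gupper}, this bound keeps only the strictly positive eigenvalues rather than charging $m$ copies of the global maximum, and the two bounds are generally incomparable.
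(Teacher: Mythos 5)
Your proof is correct and takes essentially the same route as the paper's: spectral decomposition of each $B_i$, discarding the negative-eigenvalue contributions (which can only decrease the trace against a positive $E_i$), and then bounding via $E_i \preceq I$. The paper phrases this termwise as $\sum_{j:\lambda_{ij}>0}\lambda_{ij}\bra{\psi_{ij}}E_i\ket{\psi_{ij}} \leq \sum_{j:\lambda_{ij}>0}\lambda_{ij}$ rather than through the operators $B_i^{\pm}$, but the two arguments are identical in substance.
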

\begin{proof}
Suppose the spectral decomposition of $B_i$ is $B_i=\sum_{j\in [m]}
\lambda_{ij}\ket{\psi_{ij}}\bra{\psi_{ij}}$. Then
\begin{align}
Gain & = \sum_i Tr(E_i B_i) \\
& = \sum_i Tr\big(E_i \sum_{j\in [m]} \lambda_{ij}\ket{\psi_{ij}}\bra{\psi_{ij}} \big) \\
& \leq \sum_{ij: \lambda_{ij} > 0} \lambda_{ij} \bra{\psi_{ij}} E_i \ket{\psi_{ij}}\\
& \leq \sum_{ij: \lambda_{ij} > 0} \lambda_{ij} & (\bra{\psi_{ij}} E_i \ket{\psi_{ij}} \leq \bra{\psi_{ij}} I \ket{\psi_{ij}} = 1)
\end{align}
\end{proof}

The above bounds can be pretty tight. By the dual SDP, it is not
hard to see that the gain is the following value:
\begin{align*}
    \min & \quad \emph{tr}(Y)\\
    \emph{s.t.} & \quad Y \succeq B_i,\ \forall i\in [m]
\end{align*}
Consider the following example:
\[
u_1 = \begin{bmatrix} 1 & \cdots & 1 \\ 0 & \cdots & 0 \\ \vdots & \ddots & \vdots \\ 0 & \cdots & 0 \end{bmatrix}_{m\times n}, \quad
\rho = \begin{bmatrix} \frac{1}{mn} & &  \\ & \ddots & \\  & & \frac{1}{mn} \end{bmatrix}_{mn\times mn}
\]
It is not hard to verify that
\[B_1 = diag(0,1/m, ..., 1/m), \quad B_2 = ... = B_m = diag(-1/m, 0, ..., 0).\]
Therefore, the gain is $tr(B_1) = (m-1)/m$, which matches the second bound, and is also close to the first bound $1$.

\section{Open problems}\label{sec: open problems}
Some open problems are left for future explorations.
\begin{enumerate}
\item In Section 4, we show that if the condition is not satisfied for Player $i$, the player can use a POVM to obtain a strictly positive gain. A natural questions is, can the POVM be replaced by a unitary operation? In general, can the maximum gain always be achieved by a unitary operation?

\item Can the condition be simplified if $\rho$ is a pure state?

%\item Note that in this paper and~\cite{Zha10}, when quantizing games, the size of the quantum system of every player is the same as the number of his/her strategies. If we quantize the games with bigger quantum systems, is it possible that more QCEs exist beyond those discussed in this paper?

\item How to improve the bounds in Section \ref{sec: ub}?

\item Can we have a nice characterization of $\epsilon$-approximate QCE? (Results in Section \ref{sec: ub} provide sufficient conditions for $\epsilon$-QCE, where the $\epsilon$ is the one of the given upper bounds. We hope to say more.)
\end{enumerate}

\vspace{0.1in}

\subsection*{Acknowledgment} Z.W. thanks Kewk Leong Chuan, Ji Zhengfeng and Iordanis Kerenidis for helpful comments. Z.W. was supported by the grant from the Centre for Quantum Technologies (CQT), the WBS grants under contracts no. R-710-000-008-271 and R-710-000-007-271. S.Z. was supported by China Basic Research Grant 2011CBA00300 (sub-project 2011CBA00301), and Research Grants Council of the Hong Kong S.A.R. (Project no. CUHK419309 and CUHK418710). Part of the work was done when S.Z. visited CQT and Tsinghua University, the latter under the support of China Basic Research Grant 2007CB807900 (sub-project 2007CB807901).
\bibliographystyle{alpha}
\bibliography{QCE}
\end{document}